\documentclass[letterpaper, 10 pt, conference]{ieeeconf}

\IEEEoverridecommandlockouts               

\usepackage{cite}
\usepackage{amsmath,amssymb,amsfonts}
\usepackage{algorithmic}
\usepackage{graphicx}
\usepackage{textcomp}
\usepackage{xcolor}
\def\BibTeX{{\rm B\kern-.05em{\sc i\kern-.025em b}\kern-.08em
    T\kern-.1667em\lower.7ex\hbox{E}\kern-.125emX}}

\usepackage{siunitx}
\usepackage{tikz}
\usetikzlibrary{3d} 
\usetikzlibrary{calc}
\usetikzlibrary{shapes,arrows}
\usetikzlibrary{matrix} 
\tikzset{%
  block/.style    = {draw, thick, rectangle, minimum height = 3em,
    minimum width = 3em},
  sum/.style      = {draw, circle, node distance = 2cm}, 
  input/.style    = {coordinate}, 
  output/.style   = {coordinate} 
}

\usepackage{graphics} 
\usepackage{epsfig} 
\usepackage{subcaption}
\usepackage{booktabs}  
\usepackage{multirow}

\newtheorem{theorem}{Theorem}

\newtheorem{remark}{Remark}

\newenvironment{problem*}{%
  \par\textit{Problem:} 
}{\par} 

\newenvironment{theorem*}{%
  \par\textit{Theorem:} 
}{\par} 

\usepackage[acronym,nonumberlist,nopostdot,nogroupskip]{glossaries}
\newacronym{kf}{KF}{Kalman filter}
\newacronym{ekf}{EKF}{extended Kalman filter}
\newacronym{eqf}{EqF}{equivariant filter}
\newacronym{ltv}{LTV}{linear time-varying}
\newacronym{slam}{SLAM}{simultaneous localization and mapping}
\newacronym{iekf}{IEKF}{invariant extended Kalman filter}
\newacronym{imu}{IMU}{inertial measurement unit}
\newacronym{fov}{FoV}{field of view}
\newacronym{ibvs}{IBVS}{image-based visual servoing}
\newacronym{mle}{MLE}{maximum likelihood estimator}

\newcommand{\mring}[1]{\mathring{#1}}
\newcommand{\T}[0]{\mathsf{T}} 

\newcommand{\mc}[1]{{\ensuremath{\mathcal{#1}}}}  
\newcommand{\mr}[1]{{\ensuremath{\mathrm{#1}}}}   
\newcommand{\mbb}[1]{{\ensuremath{\mathbb{#1}}}}   
\newcommand{\mf}[1]{{\ensuremath{\mathfrak{#1}}}}   

\DeclareMathAlphabet{\mathpzc}{OT1}{pzc}{m}{it} 

\usepackage[textwidth=15mm]{todonotes}

\usepackage{cuted} 
\usepackage{ulem}

\newcommand{\bmat}[1]{\begin{bmatrix}#1\end{bmatrix}}

\allowdisplaybreaks 

\title{\LARGE \bf
An Equivariant Filter for Spacecraft Nonlinear Relative Navigation with Range and Bearing Measurements
}

\author{Gil Serrano$^{1}$, Bruno J. Guerreiro$^{1,2}$, Pedro Lourenço$^{3}$, and Rita Cunha$^{1}$
\thanks{The work of G. Serrano was supported by the PhD Grant from MIT Portugal and Funda\c{c}{\~a}o para a Ci{\^e}ncia e a Tecnologia
(FCT) [DOI: 10.54499/PRT/BD/154275/2022]. This work was also supported by FCT, Portugal through LARSyS [DOI: 10.54499/LA/P/0083/2020 and 10.54499/UID/50009/2025]}
\thanks{$^{1}$The authors are with the Institute for Systems and Robotics, Laboratory of Robotics and Engineering Systems, Instituto Superior Técnico, University of Lisbon, Portugal.
        ({\tt\small gil.serrano@tecnico.ulisboa.pt, rita@isr.tecnico.ulisboa.pt})}%
\thanks{$^{2}$B. J. Guerreiro is also with CTS/Uninova and LASI, School of Science and Technology, NOVA University Lisbon, Caparica, Portugal.
        ({\tt\small bj.guerreiro@fct.unl.pt})}%
\thanks{$^{3}$P. Lourenço is with the GNC Division, Flight Segment and Robotics, GMV, Lisbon, Portugal.
        ({\tt\small palourenco@gmv.com})}%
}

\begin{document}

\maketitle
\thispagestyle{empty}
\pagestyle{empty}


\begin{abstract}

Relative navigation is a fundamental task in space proximity operations and autonomous rendezvous. 
The nonlinear relative orbital dynamics are equivariant and admit a semidirect product Lie group symmetry. 
This property is used to design an Equivariant Filter (EqF). 
The filter estimates the relative position and velocity between a chaser spacecraft and a target by lifting the filter dynamics to the group, while respecting the underlying geometry of the problem.
Simulations demonstrate the filter’s performance and effectiveness.
\end{abstract}


\section{INTRODUCTION}
\label{sec:introduction}

Accurate and robust relative state estimation is an essential requirement for autonomous spacecraft proximity operations. In recent years, rendezvous and capture operations for future On-Orbit Servicing (OOS) and Active Debris Removal (ADR) missions have attracted significant interest, both in industry and academia, due to the increase in satellite deployments \cite{biesbroek_edeorbit_2017,hatty_viability_2022}. Although proximity missions have historically involved human operators, there has been a push for fully autonomous methods, which entail the development of more complex and precise solutions.

In an OOS/ADR scenario, a chaser or servicer spacecraft needs to approach a possibly uncooperative target, for which it requires an estimate of the relative state between both. Even when applying simplifying assumptions, such as no perturbations or a circular target orbit, the relative orbital motion is governed by nonlinear equations. Typically, control and navigation methods for proximity operations in LEO use the Clohessy-Wiltshire (CW) equations \cite{clohessy_rendezvous_1960} to model the system. The CW equations are derived by linearizing the relative dynamics in the target's local-vertical-local-horizontal (LVLH) frame, assuming that the target's orbit is circular, the Earth is spherically symmetric, and the distance between the spacecraft is small with respect to the orbit radius. For elliptical orbits, the linear time-varying Tschauner-Hempel equations are commonly used \cite{alfriend_formation_2009, fehse_rendezvous_2003}.

Despite the widespread use of these linear models \cite{grzymisch_optimal_2015, luo_proximity_2016, napolano_multisensor_2023, nocerino_navigation_2026}, the nonlinear relative orbital dynamics exhibit a property called equivariance, which can be advantageous for navigation. The system admits a nontrivial symmetry structure that can be represented using a semidirect product Lie group.
This property allows one to exploit the symmetry of the system, while respecting the underlying geometry, to design robust observers \cite{mahony_observer_2022}.
One such observer is the Equivariant Filter (EqF) \cite{van_goor_equivariant_filter_2023}. 
The EqF works by expressing the observer state in the symmetry group and lifting the dynamics of the system, linearizing the error dynamics on the state manifold around the origin, and applying principles from EKF design \cite{mahony_equivariant_2021, van_goor_equivariant_filter_2023}. As such, the EqF leverages the inherent symmetry and equivariance of the system to obtain more consistent and robust estimates, compared to more traditional filters that may disregard the underlying geometry of the state space.
This method has been successfully applied to several problems of interest in the control and robotics community, such as inertial navigation \cite{fornasier_equivariant_2022}, VIO \cite{van_goor_eqvio_2023}, and SLAM \cite{ge_equivariant_2025}.
In the space domain, it has been applied to the estimation of relative attitude and angular velocity \cite{serrano_equivariant_attitude_2026}.

Despite these developments, the application of equivariant filtering to relative orbital navigation remains largely unexplored. While the group structure we propose for this problem has previously been considered for simpler dynamical systems \cite{ge_discrete_2022}, its use for nonlinear relative orbital dynamics introduces additional structure between the states.
Thus, the main contributions of this work are: (i) the identification of an equivariant structure for nonlinear relative orbital dynamics; (ii) the derivation of the corresponding Equivariant Filter; (iii) the validation of the approach via simulation.

The remainder of this letter is organized as follows: Section\;\ref{sec:problem_statement} describes the problem at hand; Section\;\ref{sec:equivariant_formulation} details the equivariant structure, introducing the Lie group symmetry, proving equivariance of the system and deriving the filter; Section\;\ref{sec:simulations} presents numerical results that validate the approach; Section\;\ref{sec:discussion} examines the method and findings; and, finally, Section\;\ref{sec:conclusions} offers concluding remarks. 


\section{PROBLEM STATEMENT}
\label{sec:problem_statement}

Consider an unactuated target satellite and an actuated chaser satellite orbiting a central body. We denote the target's position and the chaser's position as ${r_{t},\,r_{c}\in\mbb{R}^{3}}$, respectively. The unperturbed Keplerian two-body equations of motion of each spacecraft, with respect to an Earth-centered inertial (ECI) frame $\{\mc{I}\}$, are given by
\begin{gather}
    \ddot{r}_{t} = -\mu \frac{r_{t}}{\|r_{t}\|^{3}} ,
    \label{eq:target_satellite_orbital_dynamics}\\
    \ddot{r}_{c} = -\mu \frac{r_{c}}{\|r_{c}\|^{3}} + u_{c} ,
    \label{eq:chaser_satellite_orbital_dynamics}
\end{gather}
where $\mu$ is the geocentric gravitational constant and ${u_{c}\in\mbb{R}^{3}}$ is the control acceleration applied on the chaser. 

The relative position between the target and the chaser is ${r = r_{c} - r_{t}}$. The nonlinear relative orbital dynamics, expressed in the inertial frame, are given by
\begin{equation}
    \ddot{r} = -\mu \left( \frac{r+r_{t}}{\|r+r_{t}\|^3} - \frac{r_{t}}{\|r_{t}\|^3} \right) + u_{c} .
    \label{eq:relative_orbital_dynamics_inertial}
\end{equation}

If instead one expresses the motion in the target's rotating LVLH frame, $\{\mc{T}\}$, also known as the Euler-Hill frame, the dynamic equations become
\begin{align}
        \dot{r} &= v \,, \nonumber\\
        \dot{v} &= -S(\alpha)\,r -S^{2}(\omega)\,r - 2\,S(\omega)\,v  \nonumber\\
                &\hspace{11pt}-\mu \left( \frac{r+r_{t}}{\|r+r_{t}\|^3} - \frac{r_{t}}{\|r_{t}\|^3} \right) + u_{c} \,,     
    \label{eq:relative_orbital_dynamics_local}
\end{align}
where ${v\in\mbb{R}^{3}}$ is the relative velocity, ${\omega,\,\alpha\in\mbb{R}^{3}}$ are the target's local frame angular velocity and acceleration, respectively, and $S(\cdot)$ is the skew-symmetric matrix operator, such that ${\forall x, y \in \mbb{R}^{3},~ S(x)y = x\times y}$. 
Fig.\;\ref{fig:relative_navigation_diagram} shows a diagram of the problem at hand.
For a more detailed explanation of the equations of motion, see \cite[Ch. 4]{alfriend_formation_2009}.

\begin{figure}
    \centering
    \includegraphics[width=\columnwidth]{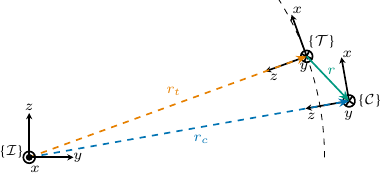}
    \caption{Relative navigation problem}
    \label{fig:relative_navigation_diagram}
\end{figure}

All quantities in \eqref{eq:relative_orbital_dynamics_local} are expressed in $\{\mc{T}\}$, reusing notation but contrasting with \eqref{eq:relative_orbital_dynamics_inertial}, expressed in  $\{\mc{I}\}$, and these are the equations of motion that will be used throughout this work.

Note, however, that the chaser's actuator input is known in the chaser's frame, $\{\mc{C}\}$. For ${r\ll r_{t}}$, the angular difference between the target's and the chaser's LVLH frames due to the curvature of the orbit is typically within the accuracy of attitude measurement and onboard calculation \cite[Ch. 7]{fehse_rendezvous_2003}. For example, with both spacecraft in the same LEO orbit, at a ${10^{3}\,\si{\kilo\meter}}$ altitude, and ${r\sim 10^{3}\,\si{\meter}}$, the angular difference between the two frames is approximately ${10^{-4}\,\si{\radian}}$. Since this difference is negligible, we will use $u_{c}$ expressed in $\{\mc{C}\}$ as if it were expressed in $\{\mc{T}\}$. 
In addition, $\alpha$, $\omega$, and $r_{t}$ are assumed to be obtained from ground stations. 

We assume that the chaser measures the bearing and range to the target, given by
\begin{equation}
    y_1 = -\frac{r}{\|r\|}, \quad y_2 = \|r\|.
    \label{eq:measurements}
\end{equation}
Note that, since the bearing measurement is done by the chaser, $y_1$ is the symmetric of the normalized relative position vector, again assuming that the angular difference between $\{\mc{C}\}$ and $\{\mc{T}\}$ is negligible.

\section{EQUIVARIANT FORMULATION}
\label{sec:equivariant_formulation}

\subsection{Group Symmetry}

For this problem, we propose using the Lie group ${\mr{G} = (\mr{SO}(3) \times \mr{MR}(1)) \ltimes \mr{R}^{3}}$, with Lie-algebra ${\mf{g} = (\mf{so}(3)\times\mf{mr}(1))\ltimes\mf{r}^3}$. 
This group, also known as the similarity group, $\mr{SIM}(3)$, is composed of $\mr{SO}(3)$, the special orthogonal group, $\mr{MR}(1)$, the multiplicative real group of positive scalars, and $\mr{R}^{3}$, the additive real group of 3-dimensional vectors.
An element of $\mr{G}$ is denoted by ${X = (R,s,z)}$,  where ${R\in\mr{SO}(3)}$ is a rotation matrix, ${s\in\mr{MR}(1)}$ is a positive real scalar, and ${z\in\mr{R}^{3}}$ is a 3-dimensional real vector. 
Its inverse is ${X^{-1} = (R,s,z)^{-1} = \big(R^\T, \frac{1}{s}, -\frac{1}{s}R^\T z\big)}$ and the group identity is ${\mr{id} = (I, 1, 0)}$.
For any two elements ${X_2, X_1 \in \mr{G}}$, the group product is 
\begin{align}
        X_2\cdot X_1 &= (R_2,s_2,z_2)\cdot(R_1,s_1,z_1) \nonumber\\
                     &= (R_2 R_1, s_2 s_1, z_2 + s_2 R_2 z_1).
\label{eq:group_product}
\end{align}
%
For an element of the Lie-algebra ${(\Omega,\tau,\eta)\in\mf{g}}$, the adjoint $\mr{Ad}_{X}(\Omega,\tau,\eta)$ is
\begin{equation}
    \mr{Ad}_{X}(\Omega,\tau,\eta) = \left(R\Omega R^\T, \tau, \frac{1}{s}R^{\T}\left((\Omega + \tau I)z +\eta\right)\right).
\end{equation}
This group is chosen for its compatibility with the assumed measurements of range and bearing, as will become clearer below, and because future work will pursue a bearing-only approach.

\subsection{System Equivariance}

We define the system state ${\xi = (r,v) \in \mc{M}}$, the input ${u = (\alpha, \omega, r_{t}, u_{c}, u_{v}, u_{w}) \in \mc{U}}$, and the output ${y = (y_1, y_2) \in \mc{Y}}$, where 
\begin{align}
    \mc{M} &= \mbb{R}^{3}\times\mbb{R}^{3},\\
    \mc{U} &= \mbb{R}^{3}\times\mbb{R}^{3}\times\mbb{R}^{3}\times\mbb{R}^{3}\times\mbb{R}^{3}\times\mbb{R},\\
    \mc{Y} &= \mc{S}^{2}\times \mbb{R}_{+},
\end{align}
are the state, input, and output manifolds, respectively, and $\mc{S}^2$ is the unit 2-sphere. Note that we have included ${u_{v}\in\mbb{R}^{3}}$ and ${u_{w}\in\mbb{R}}$, which are virtual inputs required to prove equivariance of the system \cite{mahony_observer_2022}. The extended version of system \eqref{eq:relative_orbital_dynamics_local}, including the virtual inputs, ${\dot{\xi}=f(\xi, u)}$, is
\begin{align}
        \dot{r} &= v + u_{v} \,, \nonumber\\
        \dot{v} &= -S(\alpha)\,r -S^{2}(\omega)\,r - 2\,S(\omega)\,(v + u_{v}) \nonumber\\
                &\hspace{11pt}-\mu \left( \frac{r+r_t}{\|r+r_t\|^3} - \frac{r_t}{\|r_t\|^3} \right)u_{w} + u_c \,.   
    \label{eq:relative_orbital_dynamics_local_extended}
\end{align}
The original system \eqref{eq:relative_orbital_dynamics_local} can be recovered from \eqref{eq:relative_orbital_dynamics_local_extended} by setting $u_{v}=0$ and $u_{w}=1$. For compactness, we will use the notation ${\dot{r}=f_r(\xi, u)}$ and ${\dot{v}=f_v(\xi, u)}$.

A system is said to be equivariant if it satisfies the condition
\begin{equation}
    D \phi(X, \xi)[f(\xi,u)] = f(\phi(X, \xi), \psi(X, u)),
    \label{eq:equivariant_system_condition}
\end{equation}
where ${\phi: \mr{G}\times\mc{M} \to \mc{M}}$ and ${\psi: \mr{G}\times\mc{U} \to \mc{U}}$ are right transitive actions that describe how the group acts on the state and input manifolds, respectively.
The notation $\phi_{X}(\xi)$ will be used to indicate that the argument $X$ is fixed and $\phi$ is evaluated with respect to $\xi$ and vice-versa for $\phi_{\xi}(X)$. The same logic applies to the input action $\psi$.

\begin{theorem}
    The right transitive state and input actions
    \begin{equation} 
        \phi\big((R,s,z), (r,v)\big) = \left(\frac{1}{s}R^\T r,\frac{1}{s}R^\T (v-z)\right), 
        \label{eq:phi_state_action}
    \end{equation}
    \begin{align}
        &\psi\left((R,s,z), (\alpha, \omega,r_{t},u_{c},u_{v},u_{w})\right) = \nonumber\\
        &\left(R^\T \alpha, R^\T \omega, \frac{1}{s}R^\T r_t, \frac{1}{s}R^\T u_c, \frac{1}{s^{3}} u_{w},\frac{1}{s}R^\T (u_v + z)\right)
    \label{eq:psi_input_action}
    \end{align}        
    satisfy condition \eqref{eq:equivariant_system_condition}. Therefore, the system \eqref{eq:relative_orbital_dynamics_local_extended} is equivariant.
\label{theo:system_equivariance}
\end{theorem}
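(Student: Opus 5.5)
The plan is to verify two separate things. First, that \eqref{eq:phi_state_action} and \eqref{eq:psi_input_action} are genuine right group actions, with $\phi$ transitive. Second, that the infinitesimal equivariance condition \eqref{eq:equivariant_system_condition} holds for \eqref{eq:relative_orbital_dynamics_local_extended}. Throughout, I read the input action component-wise in the order of the input tuple. That is, $(u_v,u_w)$ is mapped to $\big(\tfrac{1}{s}R^\T(u_v+z),\ \tfrac{1}{s^3}u_w\big)$; the last two entries in \eqref{eq:psi_input_action} appear to be listed in swapped order.

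\textbf{Action properties.} The identity $\mr{id}=(I,1,0)$ clearly acts trivially. For the composition property, use the product \eqref{eq:group_product}: $X_1X_2=(R_1R_2,\,s_1s_2,\,z_1+s_1R_1z_2)$. I will check $\phi_{X_2}(\phi_{X_1}(\xi))=\phi_{X_1X_2}(\xi)$ by direct expansion. The $r$-component gives $\tfrac{1}{s_1s_2}(R_1R_2)^\T r$. The $v$-component gives
\[
\tfrac{1}{s_2}R_2^\T\Big(\tfrac{1}{s_1}R_1^\T(v-z_1)-z_2\Big)=\tfrac{1}{s_1s_2}(R_1R_2)^\T\big(v-z_1-s_1R_1z_2\big),
\]
which is exactly the $v$-component of $\phi_{X_1X_2}(\xi)$. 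The same computation applies verbatim to the $u_v$ entry of $\psi$, with the sign of $z$ flipped. The remaining entries of $\psi$ are plain products of rotations and powers of $s$, so they compose trivially.

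For transitivity, given $(r,v)$ and $(r',v')$ with $r,r'\neq 0$, I pick $s=\|r\|/\|r'\|$ and a rotation $R$ with $R^\T r/\|r\| = r'/\|r'\|$. I then set $z=v-sRv'$. I will remark that $r=0$ (collision) must be excluded; the measurement \eqref{eq:measurements} is undefined there anyway. I expect this domain caveat to be the only delicate point.

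\textbf{Equivariance.} Since $\phi_X$ is affine in $\xi$, its differential is $D\phi(X,\xi)[(\dot r,\dot v)]=\big(\tfrac{1}{s}R^\T\dot r,\ \tfrac{1}{s}R^\T\dot v\big)$. So it suffices to show $f_r(\phi_X(\xi),\psi_X(u))=\tfrac{1}{s}R^\T f_r(\xi,u)$, and likewise for $f_v$.

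For $f_r$, the $z$ terms cancel:
\[
\tfrac{1}{s}R^\T(v-z)+\tfrac{1}{s}R^\T(u_v+z)=\tfrac{1}{s}R^\T(v+u_v).
\]
The same cancellation makes the Coriolis argument $v'+u_v'$ equal to $\tfrac{1}{s}R^\T(v+u_v)$.

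For the kinematic terms in $f_v$ I use the identity $S(R^\T a)R^\T=R^\T S(a)$. This gives $S(R^\T\alpha)\tfrac{1}{s}R^\T r=\tfrac{1}{s}R^\T S(\alpha)r$, and similarly for $S^2(\omega)r$ and $S(\omega)(v+u_v)$.

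The gravitational term is the one that needs care, since it is nonlinear. With $r'+r_t'=\tfrac{1}{s}R^\T(r+r_t)$, we get $\|r'+r_t'\|^3=\|r+r_t\|^3/s^3$. Hence
\[
\frac{r'+r_t'}{\|r'+r_t'\|^3}=s^2R^\T\frac{r+r_t}{\|r+r_t\|^3},
\]
and the same scaling holds for the $r_t$ term. Multiplying by $u_w'=u_w/s^3$ yields exactly $\tfrac{1}{s}R^\T$ times the original gravity term. This is precisely why the virtual input $u_w$ must scale as $s^{-3}$.

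Finally, $u_c'=\tfrac{1}{s}R^\T u_c$ is immediate. Summing all the terms establishes \eqref{eq:equivariant_system_condition}, and hence the system \eqref{eq:relative_orbital_dynamics_local_extended} is equivariant.
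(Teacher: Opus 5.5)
Your proposal is correct and follows the same route as the paper. The paper expands $D\phi(X,\xi)[f(\xi,u)]=\big(\tfrac{1}{s}R^\T f_r,\tfrac{1}{s}R^\T f_v\big)$ and matches it term by term with $f(\phi(X,\xi),\psi(X,u))$, using the $z$-cancellation, the conjugation identity for $S(\cdot)$, and the $s^{2}$ scaling of the gravity term compensated by $u_w/s^{3}$, just as you do. Your extra checks are correct and fill in details the paper leaves implicit: the right-action axioms, transitivity of $\phi$ on $(\mathbb{R}^3\setminus\{0\})\times\mathbb{R}^3$ (cf.\ Remark~\ref{rem:transitive_action_set}), and reading the last two entries of $\psi$ as swapped.
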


\begin{proof}
    To prove equivariance of the system, we develop \eqref{eq:equivariant_system_condition} with the proposed state action $\phi$ and input action $\psi$. This leads to
    \begin{align}
        &D \phi(X, \xi)[f(\xi,u)]  \nonumber\\
        &\hspace{6pt}=\bigg( \frac{1}{s}R^\T(v + u_{v}), -\frac{1}{s}R^\T S(\alpha)r -\frac{1}{s}R^\T S^{2}(\omega) r \nonumber\\
        &\hspace{30pt}-2 \frac{1}{s}R^\T S(\omega) (v + u_{v}) - \mu \frac{1}{s}R^\T \frac{r+r_{t}}{\|r+r_{t}\|^3} u_{w} \nonumber\\
        &\hspace{30pt}+ \mu \frac{1}{s}R^\T \frac{r_{t}}{\|r_{t}\|^3} u_{w} + \frac{1}{s}R^\T u_{c} \bigg) \nonumber\\
        &\hspace{6pt}=f(\phi(X, \xi), \psi(X, u)).
    \label{eq:system_equivariance_proof}
    \end{align}
\end{proof}
    
\begin{remark}
    A group action is called transitive if it is surjective, i.e. ${\forall \xi, \mring{\xi} \in\mc{M},\; \exists X\in\mr{G}: \phi_{X}(\mring{\xi})=\xi}$. However, for ${\mring{\xi}=(0,\mring{v})}$, this property does not hold.
    Therefore, the group action $\phi_{X}$ is not transitive on the entire ${\mbb{R}^{3}\times \mbb{R}^{3}}$, but rather on ${\mbb{R}^{3}\setminus\{0\}\times \mbb{R}^{3}}$.
\label{rem:transitive_action_set}
\end{remark}

\subsection{Output Equivariance}
\label{subsec:output_equivariance}

The measurement function is ${y = h(\xi) = \left(-\frac{r}{\|r\|}, \|r\|\right)}$. We define the right output action ${\rho : \mr{G} \times \mc{Y} \to \mc{Y}}$ as
\begin{equation}
    \rho\big((R,s,z),(y_1, y_2)\big) = \left(R^\T y_1, \frac{1}{s}y_2\right) .
    \label{eq:right_ouput_action}
\end{equation}
Since ${\rho(X,y)=h(\phi(X,\xi))}$, we conclude that the output is equivariant \cite{mahony_observer_2022}.

\subsection{Equivariant Lift}

Next, we derive an equivariant lift of the system. The lift  ${\Lambda: \mc{M} \times \mc{U} \to \mf{g}}$ is a map that lifts the dynamics of the system from the state manifold to the Lie-algebra of the symmetry group \cite{mahony_equivariant_2021, mahony_observer_2022}.
An equivariant lift must satisfy the conditions
\begin{gather}
    D_{|_{X=\mr{id}}}\phi_{\xi}(X)[\Lambda(\xi, u)] = f(\xi, u),\label{eq:lift_condition_1}\\
    \mr{Ad}_{X^{-1}}(\Lambda(\xi, u)) = \Lambda(\phi_{X}(\xi), \psi_{X}(u)).\label{eq:lift_condition_2}
\end{gather}
For brevity, we will at times write the lift (and each of its components) without the arguments, i.e., ${\Lambda=\Lambda(\xi, u)}$.
\begin{theorem}
The lift $\Lambda = (\Lambda_R, \Lambda_s, \Lambda_z)$ given by
\begin{subequations}
\label{eq:equivariant_lift}
\begin{align}
    \Lambda_R &= -S\bigg( \frac{r \times (v + u_{v})}{\|r\|^{2}} \bigg) ,\label{eq:equivariant_lift_R}\\ 
    \Lambda_s &= -\frac{r^\T (v + u_{v})}{\|r\|^{2}} , \label{eq:equivariant_lift_s}\\
    \Lambda_z &= S\bigg( \frac{r \times (v + u_{v})}{\|r\|^{2}} \bigg) v + \frac{r^\T (v + u_{v})}{\|r\|^{2}} v \nonumber\\
              & \hspace{10pt}+ S(\alpha) r + S^{2}(\omega) r + 2 S(\omega) (v + u_{v}) \nonumber\\
              & \hspace{10pt}+ \mu \bigg( \frac{r+r_t}{\|r+r_t\|^3} - \frac{r_t}{\|r_t\|^3} \bigg)u_{w} - u_c \label{eq:equivariant_lift_z}
\end{align}
\end{subequations}
is equivariant.
\end{theorem}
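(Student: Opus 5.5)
The plan is to verify the two defining conditions \eqref{eq:lift_condition_1} and \eqref{eq:lift_condition_2} directly. First I will compute the differential of $\phi_\xi$ at the identity. Take a curve $X(t)=(R(t),s(t),z(t))$ with $X(0)=\mr{id}$ and velocity $(\Omega,\tau,\eta)\in\mf{g}$, so that $\dot R(0)=\Omega$, $\dot s(0)=\tau$ and $\dot z(0)=\eta$. Using $\tfrac{d}{dt}R^\T=\Omega^\T=-\Omega$ and $\tfrac{d}{dt}(1/s)=-\tau$ at $t=0$, differentiating \eqref{eq:phi_state_action} gives
\begin{equation*}
D_{|_{X=\mr{id}}}\phi_{\xi}(X)[(\Omega,\tau,\eta)] = \big(-(\Omega+\tau I)r,\; -(\Omega+\tau I)v-\eta\big).
\end{equation*}

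Condition \eqref{eq:lift_condition_1} then splits into two parts.

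\emph{Position component.} Set $a=v+u_v$ and $w=r\times a/\|r\|^2$, so that $\Lambda_R=-S(w)$ and $\Lambda_s=-r^\T a/\|r\|^2$. We need $-(\Lambda_R+\Lambda_s I)r=a$, which reads
\begin{equation*}
w\times r+\frac{r^\T a}{\|r\|^2}\,r = a .
\end{equation*}
This is the orthogonal decomposition of $a$ into its components perpendicular and parallel to $r$, via $(r\times a)\times r=\|r\|^2a-(r^\T a)r$. It needs $r\neq 0$, consistent with Remark~\ref{rem:transitive_action_set}.

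\emph{Velocity component.} We need $-(\Lambda_R+\Lambda_s I)v-\Lambda_z=f_v(\xi,u)$. Solving for $\Lambda_z$ gives exactly \eqref{eq:equivariant_lift_z}. In effect $\Lambda_z$ is defined by this identity, so this check is immediate.

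For \eqref{eq:lift_condition_2}, write $\xi'=\phi_X(\xi)$ and $u'=\psi_X(u)$. From \eqref{eq:phi_state_action} and \eqref{eq:psi_input_action},
\begin{equation*}
r'=\tfrac1s R^\T r,\qquad v'+u_v'=\tfrac1sR^\T(v-z+u_v+z)=\tfrac1sR^\T a .
\end{equation*}
The $z$-terms cancel here, which is precisely why the virtual input $u_v$ transforms as it does. It follows that $w'=R^\T w$. Using $S(R^\T w)=R^\T S(w)R$, we get $\Lambda_R(\xi',u')=R^\T\Lambda_R R$ and $\Lambda_s(\xi',u')=\Lambda_s$. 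These match the first two components of $\mr{Ad}_{X^{-1}}\Lambda$, which are $R^\T\Lambda_R R$ and $\Lambda_s$.

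For the third component I will not expand \eqref{eq:equivariant_lift_z} term by term. Instead I will use the implicit characterization $\Lambda_z=-(\Lambda_R+\Lambda_s I)v-f_v$ together with Theorem~\ref{theo:system_equivariance}, which gives $f_v(\xi',u')=\tfrac1sR^\T f_v(\xi,u)$. Substituting,
\begin{align*}
\Lambda_z(\xi',u')&=-R^\T(\Lambda_R+\Lambda_s I)R\,\tfrac1sR^\T(v-z)-\tfrac1sR^\T f_v\\
&=\tfrac1sR^\T\big(\Lambda_z+(\Lambda_R+\Lambda_s I)z\big).
\end{align*}
The remaining task is to match this with the third component of $\mr{Ad}_{X^{-1}}(\Lambda_R,\Lambda_s,\Lambda_z)$, computed from the stated adjoint formula with $X^{-1}=(R^\T,\tfrac1s,-\tfrac1sR^\T z)$.

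I expect this last matching to be the main obstacle. The sign and scaling conventions of the given $\mr{Ad}$ expression, and of which inverse or rotation appears in it, must be reconciled with the right-action convention. I will first recompute $\mr{Ad}_{X^{-1}}$ directly as $\tfrac{d}{dt}\,X^{-1}\exp(t\Lambda)X$ from the group product \eqref{eq:group_product}. I expect this to yield exactly $\tfrac1sR^\T\big((\Lambda_R+\Lambda_s I)z+\Lambda_z\big)$, which would complete the proof.
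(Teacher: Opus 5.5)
Your proposal is correct and follows the paper's proof essentially step for step. It uses the same differential $D_{|_{X=\mr{id}}}\phi_\xi$, the same cancellation of $z$ in $v'+u_v'$ for the $\Lambda_R$ and $\Lambda_s$ components, and the same implicit characterization $\Lambda_z=-(\Lambda_R+\Lambda_s I)v-f_v$ combined with $f_v(\xi',u')=\tfrac1s R^\T f_v(\xi,u)$; you also justify the position part of \eqref{eq:lift_condition_1} via $(r\times a)\times r=\|r\|^2a-(r^\T a)r$, which the paper leaves implicit.

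Your caution about the adjoint is warranted: the general $\mr{Ad}_X$ formula displayed in the paper, evaluated literally at $X^{-1}$, does not give the needed third component. The direct computation you plan (e.g.\ in the matrix representation $\bmat{sR & z\\ 0 & 1}$) gives exactly $\mr{Ad}_{X^{-1}}(\Lambda)=\bigl(R^\T\Lambda_R R,\Lambda_s,\tfrac1s R^\T((\Lambda_R+\Lambda_s I)z+\Lambda_z)\bigr)$. This is the expression the paper itself uses in \eqref{eq:lift_condition_2_lhs}, so your argument closes.
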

\begin{proof}
We need to analyze conditions \eqref{eq:lift_condition_1} and \eqref{eq:lift_condition_2} to verify that the lift is, in fact, equivariant. Let us start by developing condition \eqref{eq:lift_condition_1}, which results in
\begin{align}
    &D_{|_{X=\mr{id}}}\phi_{\xi}(X)[\Lambda(\xi, u)] \nonumber\\
    &\hspace{8pt}= (-\Lambda_R r - \Lambda_s r, -\Lambda_z -\Lambda_R v - \Lambda_s v) \nonumber\\
    &\hspace{8pt}= \Bigg( S\bigg( \frac{r \times (v + u_{v})}{\|r\|^{2}} \bigg)r +\frac{r^\T (v + u_{v})}{\|r\|^{2}} r, \nonumber\\
    &\hspace{30pt}   -S\bigg( \frac{r \times (v + u_{v})}{\|r\|^{2}} \bigg) v - \frac{r^\T (v + u_{v})}{\|r\|^{2}} v \nonumber\\
    &\hspace{30pt}- S(\alpha) r - S^{2}(\omega) r - 2 S(\omega) (v + u_{v}) \nonumber\\
    &\hspace{30pt}- \mu \bigg( \frac{r+r_t}{\|r+r_t\|^3} - \frac{r_t}{\|r_t\|^3} \bigg)u_{w} + u_c  \nonumber\\
    &\hspace{30pt}+ S\bigg( \frac{r \times (v + u_{v})}{\|r\|^{2}} \bigg)v +\frac{r^\T (v + u_{v})}{\|r\|^{2}} v \Bigg) \nonumber\\ 
    &\hspace{8pt}=\bigg(v + u_{v}, - S(\alpha) r - S^{2}(\omega) r - 2 S(\omega) (v + u_{v}) \nonumber\\
    &\hspace{60pt}- \mu \bigg( \frac{r+r_t}{\|r+r_t\|^3} - \frac{r_t}{\|r_t\|^3} \bigg)u_{w} + u_c \bigg) \nonumber\\
    &\hspace{8pt}=f(\xi,u).
\end{align}
As such, condition \eqref{eq:lift_condition_1} is satisfied.
Next, we analyze condition \eqref{eq:lift_condition_2}. The left-hand side is
\begin{equation}
    \mr{Ad}_{X^{-1}}(\Lambda) = \left(R^\T \Lambda_R R, \Lambda_s, \frac{1}{s} R^\T (\Lambda_R + \Lambda_s I) z + \frac{1}{s}R^\T \Lambda_z\right)
    \label{eq:lift_condition_2_lhs}
\end{equation} 
We analyze each component of the right-hand side of \eqref{eq:lift_condition_2} separately. For $\Lambda_R$, we have
\begin{align}
   &\Lambda_R(\phi_{X}(\xi), \psi_{X}(u)) \nonumber\\
   &\hspace{2pt}=-S\left( \frac{\frac{1}{s}R^\T r \times \left(\frac{1}{s}R^\T(v-z) + \frac{1}{s}R^\T(u_{v}+z)\right)}{\|\frac{1}{s}R^\T r\|^{2}} \right) \nonumber\\
   &\hspace{2pt}=- S\bigg( R^\T \frac{r \times (v + u_{v})}{\|r\|^{2}} \bigg) \nonumber\\ 
   &\hspace{2pt}=-R^\T S\left( \frac{r \times (v + u_{v})}{\|r\|^{2}} \right) R \nonumber\\
   &\hspace{2pt}=R^\T \Lambda_R R .
\end{align}
For the next component, $\Lambda_s$, we find that
\begin{align}
   &\Lambda_s(\phi_{X}(\xi), \psi_{X}(u)) \nonumber\\
   &\hspace{2pt}= -\frac{\left(\frac{1}{s}R^\T r\right)^\T \left(\frac{1}{s}R^\T(v-z) + \frac{1}{s}R^\T(u_{v}+z)\right)}{\|\frac{1}{s}R^\T r\|^{2}} \nonumber\\
   &\hspace{2pt}=-\frac{r^\T (v + u_{v})}{\|r\|^{2}} \nonumber\\
   &\hspace{2pt}= \Lambda_s .
\end{align}
For the third component, note that $\Lambda_z = -(\Lambda_R + \Lambda_s I)v - f_v$. Then, by substituting the expression, it follows
\begin{align}
    &\Lambda_z(\phi_{X}(\xi), \psi_{X}(u)) \nonumber\\
    &= -\Lambda_R(\phi_{X}(\xi), \psi_{X}(u)) \frac{1}{s}R^\T(v-z) \nonumber\\
    &\hspace{11pt} -\Lambda_s(\phi_{X}(\xi), \psi_{X}(u)) \frac{1}{s}R^\T(v-z) - f_v(\phi_{X}(\xi), \psi_{X}(u)) \nonumber\\
    &= -(R^\T\Lambda_R R + \Lambda_s I) \frac{1}{s}R^\T(v-z) - \frac{1}{s}R^\T f_v \nonumber\\ 
    &= \frac{1}{s}R^\T(\Lambda_R + \Lambda_s I) z -\frac{1}{s}R^\T(\Lambda_R + \Lambda_s I) v - \frac{1}{s}R^\T f_v \nonumber\\ 
    &= \frac{1}{s} R^\T \big((\Lambda_R + \Lambda_s I) z + \Lambda_z \big) .
\end{align}

Thus, conditions \eqref{eq:lift_condition_1} and \eqref{eq:lift_condition_2} are both met and we conclude that the proposed lift \eqref{eq:equivariant_lift} is an equivariant lift of the system.
\end{proof}

One can gain intuition about the lift by noting that $\Lambda_{s}$ is related to the time-derivative of the inverse of the range and that ${(r\times v)/\|r\|^2}$ is the bearing's angular velocity, indicating that $\Lambda_{R}$ is related to the rotational kinematics.

\subsection{Filter Derivation}
\label{subsec:filter_derivation}

The true state of the system on the group is denoted by $X=(R,s,z)$ and the filter state is $\hat{X} = (\hat{R}, \hat{s}, \hat{z})$. The filter dynamics, on the group, are given by 
\begin{equation}
    \begin{split}
        \dot{\hat{R}} &= \hat{R}\Lambda_R(\phi(\hat{X},\mring{\xi}), u) + \Delta_R \hat{R}\\
        \dot{\hat{s}} &= \hat{s}\Lambda_s(\phi(\hat{X},\mring{\xi}), u) + \delta_s \hat{s}\\
        \dot{\hat{z}} &= \hat{s}\hat{R}\Lambda_z(\phi(\hat{X},\mring{\xi}), u) + \Delta_R \hat{z} + \delta_s \hat{z} + \delta_z ,
    \end{split}
    \label{eq:filter_dynamics}
\end{equation}
where $\Delta = (\Delta_R, \delta_s, \delta_z) \in \mf{g}$ are correction terms that will be defined shortly. The estimate on the state manifold is 
\begin{equation}
    \hat{\xi} = \left(\hat{r},\hat{v}\right) = \phi(\hat{X}, \mring{\xi}) = \left( \frac{1}{\hat{s}} \hat{R}^{\T} \mring{r}, \frac{1}{\hat{s}} \hat{R}^{\T} \left(\mring{v} - \hat{z}\right) \right),
    \label{eq:manifold_state_estimates}
\end{equation}
where ${\mring{\xi}=(\mring{r},\mring{v})}$ is the origin of the coordinate system on the manifold, which is required to map the filter’s estimates from the group to the state manifold.
Since $\phi_{X}$ is not transitive on the entire state manifold, as pointed out in Remark\;\ref{rem:transitive_action_set}, the origin cannot be ${\mring{\xi} = (0,0)}$, as one might initially expect. As such, let the origin be ${(\mring{r},\mring{v}) = (\gamma\mr{e}_{3}, 0)}$, where ${\mr{e}_{3}=[0\;0\;1]^\T}$ is the third canonical basis vector of $\mbb{R}^{3}$ and ${\gamma\in\mbb{R}\setminus\{0\}}$.

To calculate the correction terms, we need to characterize the filter error. We begin by defining the filter error on the group, ${E=(E_R, E_s, E_z)}$, as
\begin{equation}
    E = (R, s, z) \cdot (\hat{R}, \hat{s}, \hat{z})^{-1} = \left(R\hat{R}^\T, \frac{s}{\hat{s}}, z - \frac{s}{\hat{s}} R\hat{R}^\T \hat{z}\right).
\label{eq:error_group}
\end{equation}
Then, the error on the state manifold, ${e= (e_r, e_v)}$, is obtained using $E$ and the state action $\phi$ as
\begin{equation}
    e = \phi(E,\mring{\xi})= \left(\frac{1}{E_s}E_R^{\T}\mring{r}, \frac{1}{E_s}E_R^{\T}(\mring{v} - E_z)\right).
    \label{eq:error_manifold}
\end{equation}

We also need to define the error on the tangent space at the origin $\mring{\xi}$. We fix a local coordinate chart ${\vartheta:\mc{N}_{\mring{\xi}}\to \mc{T}_{\mring{\xi}}\mc{M}}$, where ${\mc{N}_{\mring{\xi}}\subset\mc{M}}$ is a neighborhood of the origin and ${\mc{T}_{\mring{\xi}}\mc{M}\equiv\mbb{R}^{6}}$ is the tangent space at the origin.
The local coordinate chart is chosen to be the normal coordinates on $\mc{M}$, obtained by projecting exponentials from $\mr{G}$ onto the state manifold, around the origin, using the action $\phi$.
That is, with ${\varepsilon\in\mbb{R}^{6}}$ being the error on the tangent space, we have that ${\varepsilon = \vartheta(e)}$ and $e = \vartheta^{-1}(\varepsilon) = \phi(\mr{exp}(\varepsilon^\wedge), \mring{\xi})$, where ${\varepsilon^\wedge\in\mf{g}}$. The local coordinate chart is given by
\begin{equation}
    \vartheta(e) = \bigg( \bmat{\mr{e}_2^\T\\-\mr{e}_1^\T} \frac{e_{r}}{\|e_{r}\|}, \mr{log}\bigg(\frac{\|\mring{r}\|}{\|e_{r}\|}\bigg), -e_{v} \bigg),
\end{equation}
where ${\mr{e}_{1}=[1\;0\;0]^\T}$ and ${\mr{e}_{2}=[0\;1\;0]^\T}$ are the first and second canonical basis vectors of ${\mbb{R}^{3}}$, respectively.

To design the EqF, we need to characterize the error dynamics of the system. The time derivative of the error on the state manifold is given by
\begin{equation}
    \dot{e} = D\phi_{e}(\Lambda(e,\mring{u}) - \Lambda(\mring{\xi}, \mring{u})) - D\phi_{e}\Delta ,
\end{equation}
where ${\mring{u} = \psi(\hat{X}^{-1}, u)}$. Assuming that ${\forall t,~e(t) \in \mc{N}_{\mring{\xi}}}$ and knowing that ${\varepsilon = \vartheta(e)}$, the dynamics of the error on the tangent space are
\begin{equation}
    \dot{\varepsilon} = D\vartheta\cdot D\phi_{e}(\Lambda(e,\mring{u}) - \Lambda(\mring{\xi}, \mring{u})) - D\vartheta \cdot D \phi_{e}\Delta .
\end{equation}

The pre-observer dynamics of the error on the tangent space are linearized around ${\varepsilon=0}$, resulting in ${\dot{\varepsilon} = \mring{A}_t \varepsilon + \mc{O}(\|\varepsilon\|^2)}$. The matrix $\mring{A}_t$ is given by
\begin{equation}
    \mring{A}_t = D_{|_{e=\mring{\xi}}}\vartheta(e) D_{|_{E=\mr{id}}}\phi_{\mring{\xi}}(E) D_{|_{e=\mring{\xi}}} \Lambda(e,\mring{u}) D_{|_{\varepsilon=0}} \vartheta^{-1}(\varepsilon) .
    \label{eq:matrix_A_equation}
\end{equation}
The output residual ${\Tilde{y} = y-\hat{y}}$ is also linearized around ${\varepsilon=0}$. Since the output is equivariant, as seen in \ref{subsec:output_equivariance}, we have that ${\Tilde{y} = C_t \varepsilon + \mc{O}(\|\varepsilon\|^3)}$ and
\begin{equation}
    C_t \varepsilon = \frac{1}{2} \big( D_{|_{E=\mr{id}}}\rho(E,y) + D_{|_{E=\mr{id}}}\rho(E,\hat{y}) \big) \cdot \mr{Ad}_{\hat{X}^{-1}}(\varepsilon^\wedge).
    \label{eq:matrix_C_equation}
\end{equation}
The explicit expressions for the matrices $\mring{A}_t$ and $C_t$ are presented in the Appendix. 

The correction terms also depend on the Riccati state, ${\Sigma \in \mbb{S}^{6}_{+}}$, which evolves according to
\begin{equation}
    \dot{\Sigma} = \mring{A}_t \Sigma + \Sigma \mring{A}_t^\T + M_t - \Sigma C_{t}^{\T} N_{t}^{-1} C_{t} \Sigma ,
    \label{eq:riccati_state_dynamics}
\end{equation}
with ${\Sigma(0) = \Sigma_{0}}$ and ${\Sigma_{0} \in \mbb{S}^{6}_{+}}$. The matrices ${M_t \in \mbb{S}^{6}_{+}}$ and ${N_t \in \mbb{S}^{4}_{+}}$ are the state and output gain matrices, respectively, and $\mbb{S}^{n}_{+}$ is the set positive definite ${n\times n}$ matrices.

Finally, the correction terms are calculated according to
\begin{equation}
    \Delta = D_{|_{E=\mr{id}}}\phi_{\mring{\xi}}(E)^{\dagger} \cdot D \vartheta^{-1} \big[\Sigma C_{t}^{\T} N_{t}^{-1}(y - \hat{y})\big],
\end{equation}
where $(\cdot)^{\dagger}$ denotes a right inverse and ${\hat{y}=h(\hat{\xi})}$.

\subsection{Convergence Analysis}

With range and bearing measurements, the problem of estimating the relative position and velocity between the two spacecraft is observable. Given this observability condition, the EqF can be shown to be locally exponentially stable. We refer the reader to \cite{van_goor_equivariant_filter_2023, serrano_equivariant_attitude_2026} for further details.

\section{SIMULATIONS}
\label{sec:simulations}

To verify the proposed Equivariant Filter's performance, we conduct Monte Carlo simulations using MATLAB.

\subsection{Simulation setup}
\label{subsec:simulation_setup}

We simulate two scenarios: a circumnavigation orbit and an r-bar approach, depicted in Fig.\;\ref{fig:simulation_trajectories} in the LVLH frame. In both cases, the target is in a circular LEO orbit at an altitude of 1000\,\si{\kilo\meter}, with orbital period ${T\approx105\,\si{\minute}}$, and $\omega$, $\alpha$ and $r_{t}$ are assumed known from ground observation, as explained in Section\;\ref{sec:problem_statement}. For a circular orbit, $\omega$ and $r_{t}$ are constant, as they are expressed in $\{\mc{T}\}$, and ${\alpha = 0}$. Though the method is not restricted to circular orbits, they are chosen for simplicity. Each trajectory is simulated 100 times.

The Equivariant Filter is implemented in a prediction-update formulation. The prediction step is computed at every 0.01\,\si{\second}, while the update step is performed at every 1\,\si{\second}, as the measurements are assumed to be acquired synchronously at 1\,\si{\hertz}. 
Measurement noise is considered such that the bearing measurement has an angular deviation of $\theta$ from the true value, with $\theta \sim \mathpzc{N}(0, 0.01^2)$, and the range measurement is within 1\,\si{\percent} of the true range.
The initial filter estimates in the group are set to the identity, i.e., ${(\hat{R}, \hat{s}, \hat{z}) = (I,1,0)}$, with gain matrices ${M_{t}=0.1 I}$ and ${N_{t}=\mr{diag}([1\,1\,1\,10^{4}])}$. The initial value of the Riccati state is ${\Sigma_{0}=I}$ and the manifold origin is chosen to be ${\mring{\xi}=(10^{2}\mr{e}_{3},0)}$.

\begin{figure}[htbp]
    \centering
    \begin{subfigure}{\columnwidth}
        \centering
        \includegraphics[width=\columnwidth]{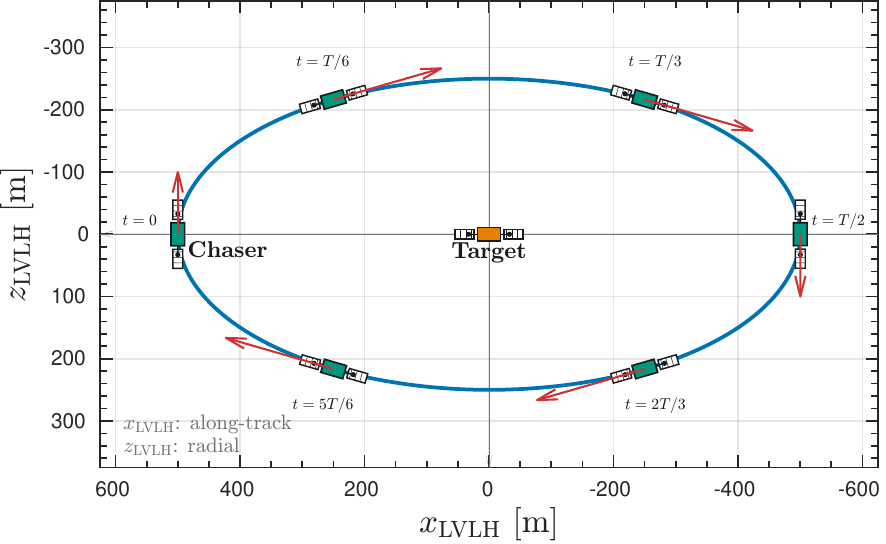}
        \caption{Circumnavigation orbit.}
        \label{fig:relative_circumnavigation_LVLH}
    \end{subfigure}
    \begin{subfigure}{\columnwidth}
        \centering
        \includegraphics[width=\columnwidth]{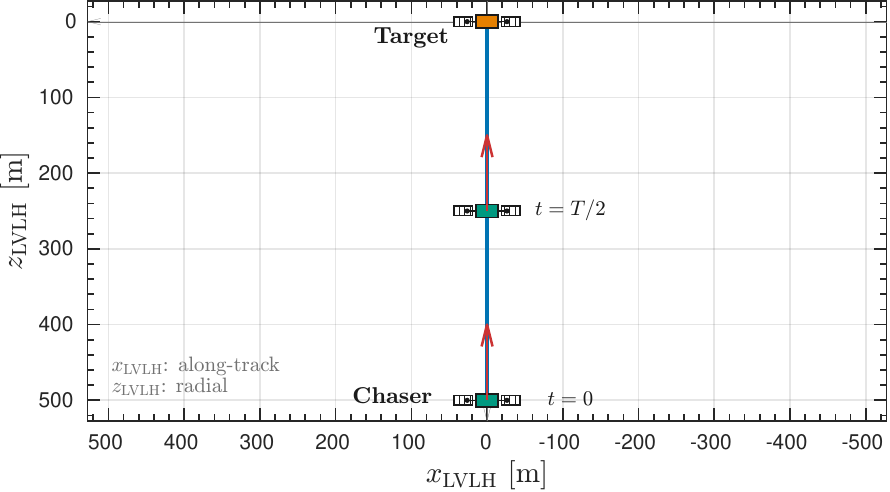}
        \caption{R-bar approach.}
        \label{fig:relative_rbar_approach_LVLH}
    \end{subfigure}
    \caption{Simulated proximity operation scenarios.}
    \label{fig:simulation_trajectories}
\end{figure}

\subsection{Circumnavigation}
\label{subsec:circumnavigation}

A circumnavigation orbit is an elliptical trajectory around the target that does not require active thrusting.
It is often employed to keep the chaser in a passively safe, thrust-free trajectory looping around the target for inspections or mapping \cite{woffinden_angles-only_2008}. In the simulations, the ellipse that the chaser describes, in the $xz$-plane of the local frame, i.e. the orbital plane, has a semi-major axis of 500\,\si{\meter} and a semi-minor axis of 250\,\si{\meter}. The chaser takes one orbital period to circumnavigate the target.

In Fig.\;\ref{fig:circumnavigation_approach_manifold_estimates}, we show the filter estimates mapped onto the state manifold, i.e., $\hat{r}$ and $\hat{v}$, in solid lines, along with the true values of the relative position and velocity, in dashed lines, for a representative Monte Carlo run. The norm of the manifold error, averaged across the Monte Carlo simulations, is plotted in Fig.\;\ref{fig:circumnavigation_manifold_errors_norm_log}, with the origin subtracted, since ${e\to\mring{\xi}}$ as the filter converges. The shaded areas represent the entire span of values in all simulations. We present the results up to one quarter of the orbit, as the filter converges within this time span.
%
From the plots, one concludes that the filter converges in less than one-tenth of the orbital period. In the time interval $[0.1,\,0.25]\,T$, the mean of the position error norm is approximately ${1.45\cdot10^{-1}\,\si{\meter}}$. In relative terms, the estimate of the position is on average within \SI{0.05}{\percent} of the true value. The mean velocity error norm is ${1.24\cdot10^{-3}\,\si{\meter\per\second}}$, corresponding to a mean relative error of \SI{0.28}{\percent}.



\begin{figure}
    \centering
    \begin{subfigure}{\columnwidth}
        \centering
        \includegraphics[width=\columnwidth]{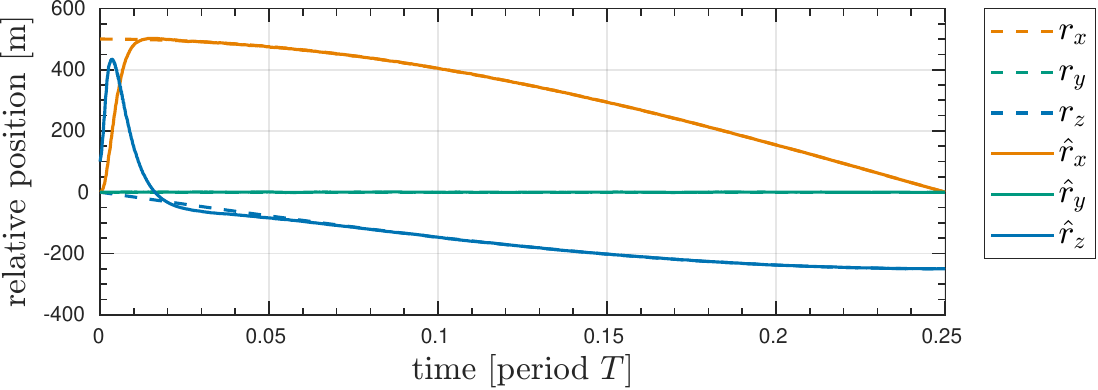}
        \caption{Relative position.}
        \label{fig:circumnavigation_manifold_position_estimate}
    \end{subfigure}
    \begin{subfigure}{\columnwidth}
        \centering
        \includegraphics[width=\columnwidth]{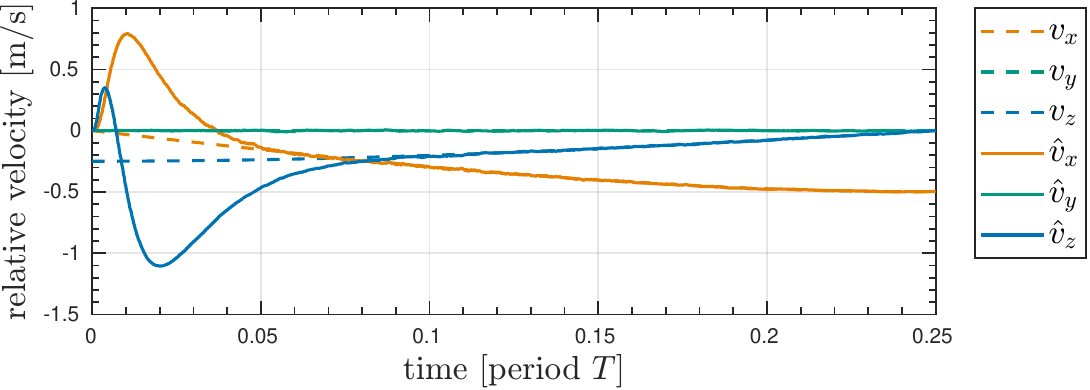}
        \caption{Relative velocity.}
        \label{fig:circumnavigation_manifold_velocity_estimate}
    \end{subfigure}
    \caption{Circumnavigation: estimated (solid) and true (dashed) relative states.}
    \label{fig:circumnavigation_approach_manifold_estimates}
\end{figure}

\begin{figure}
    \centering
    \includegraphics[width=\columnwidth]{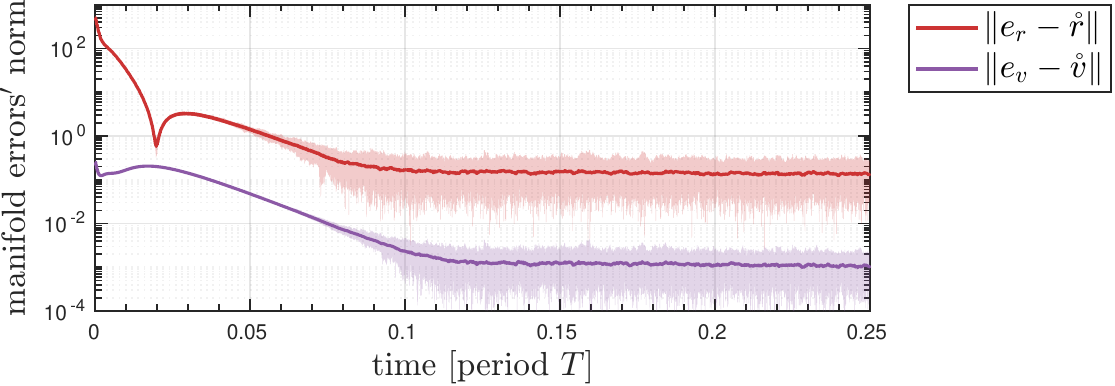}
    \caption{Circumnavigation: averaged manifold errors' norm.}
    \label{fig:circumnavigation_manifold_errors_norm_log}
\end{figure}

\subsection{R-bar approach}
\label{subsec:rbar_approach}
The second scenario is a straight line r-bar approach, from below the target, with constant velocity and continuous actuation, as described in \cite{fehse_rendezvous_2003}. The chaser starts at a distance of 500\,\si{\meter} from the target and takes one orbital period to rendezvous. Again, we show the results up to one quarter of the orbital period, since the filter converges within this time interval.
The filter estimates on the state manifold are shown in Fig.\;\ref{fig:rbar_approach_manifold_estimates}, for a representative simulation, and the norm of the manifold error across the Monte Carlo simulations is presented in Fig.\;\ref{fig:rbar_approach_manifold_errors_norm_log}. 
%
On average, in the time interval $[0.1,\,0.25]\,T$, the position error norm is approximately ${1.48\cdot10^{-1}\,\si{\meter}}$, which corresponds to \SI{0.04}{\percent} of the true relative position. The mean velocity error norm is ${1.19\cdot10^{-3}\,\si{\meter\per\second}}$, i.e. \SI{1.50}{\percent} of the true value.



\begin{figure}
    \centering
    \begin{subfigure}{\columnwidth}
        \centering
        \includegraphics[width=1.01\columnwidth]{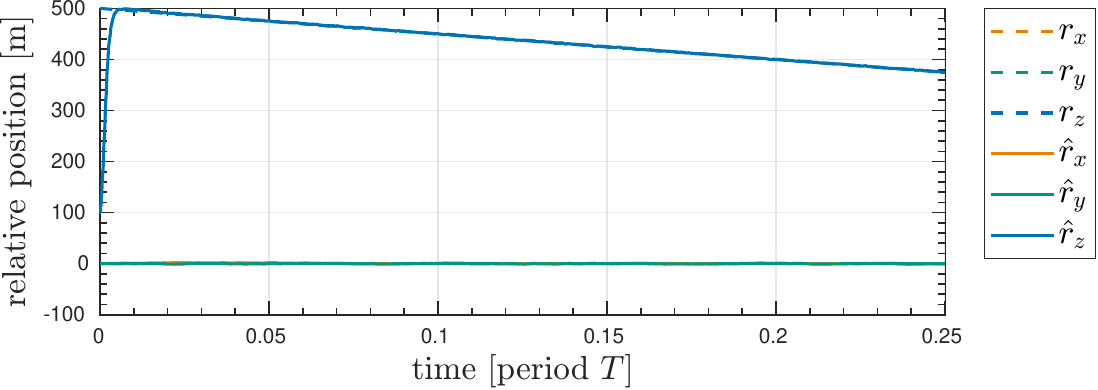}
        \caption{Relative position}
        \label{fig:rbar_approach_manifold_position_estimate}
    \end{subfigure}
    \begin{subfigure}{\columnwidth}
        \centering
        \includegraphics[width=\columnwidth]{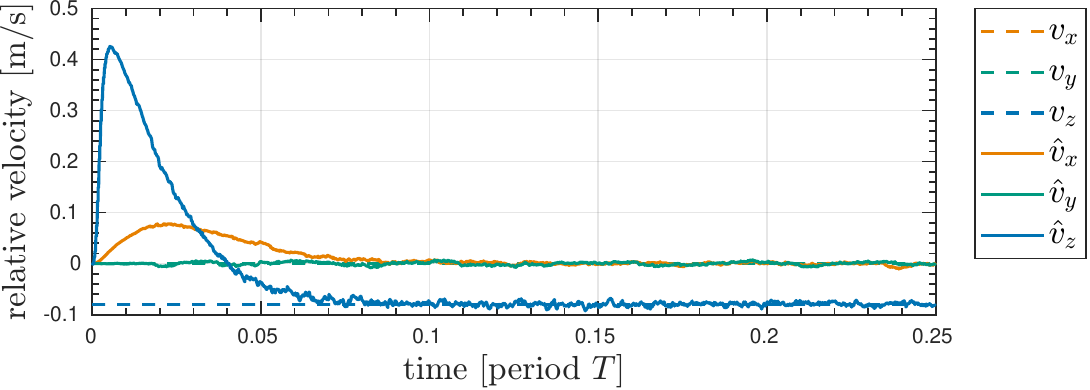}
        \caption{Relative velocity}
        \label{fig:rbar_approach_manifold_velocity_estimate}
    \end{subfigure}
    \caption{R-bar approach: estimated (solid) and true (dashed) relative states.}
    \label{fig:rbar_approach_manifold_estimates}
\end{figure}

\begin{figure}[t]
    \centering
    \includegraphics[width=\columnwidth]{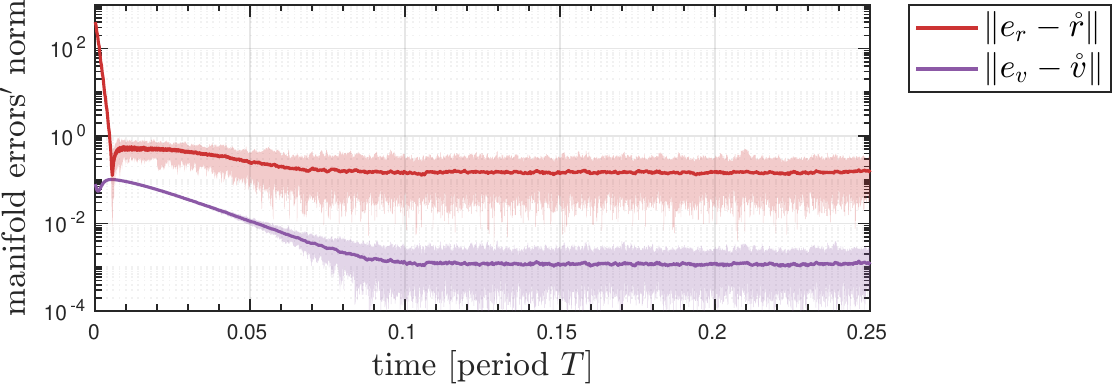}
    \caption{R-bar approach: averaged manifold errors' norm.}
    \label{fig:rbar_approach_manifold_errors_norm_log}
\end{figure}

\section{DISCUSSION}
\label{sec:discussion}

The EqF was applied to two proximity operation scenarios which are relevant for OOS/ADR missions. 
The results show that the filter is able to accurately estimate the relative position and velocity between the target and the chaser spacecraft, given noisy bearing and range measurements, at 1\,\si{\hertz} frequency. 
The filter converges in less than one-tenth of the orbit. 
Despite the difference in the trajectories, the same tuning and filter initial conditions are used for both scenarios, yielding the aforementioned results.
We also note that, while the system equations on the state manifold use a cartesian coordinates parametrization, the equations on the group are more akin to a polar coordinate parametrization, which is a consequence of the chosen symmetry group.

Nevertheless, there are still some limitations that need to be addressed. 
In these simulations, we do not take into account orbital disturbances, such as those related to residual atmosphere or geopotential anomaly.
As we do not consider what exact sensors are used, no sensor constraints, such as field-of-view, nor environmental perturbations, such as sunlight reflections, 
are examined.
The assumption of synchronous range and bearing measurements might not always be valid and analyzing only one measurement rate, though within the typical interval for onboard sensors, is limiting. 
Complete knowledge of $\omega$, $\alpha$ and $r_{t}$ is considered, while in reality there will be uncertainty associated with these quantities. 
The choice of the origin $\mring{\xi}$, particularly $\mring{r}$, has practical consequences beyond those explained in \ref{subsec:filter_derivation}. A disproportionate choice of $\mring{r}$, with respect to the distances at hand, especially at the initial time, may cause a large transient response and compromise numerical stability. 
There are terms dependent on $1/\|\hat{r}\|$ which may also cause computational issues as ${\|r\|\to0}$ during the final moments of rendezvous. Nonetheless, adequate measures can be taken from an implementation standpoint, to avoid numerical problems. For instance, the navigation system may use additional filtering algorithms running in parallel, both for usage at very small distances and to provide redundancy.
As such, additional testing regarding these issues is necessary.

Ultimately, despite further research being required, the results obtained are promising and validate the equivariant structure for the nonlinear relative orbital dynamics and the application of the filter to spacecraft relative navigation.

\section{CONCLUSIONS}
\label{sec:conclusions}

In this letter, we presented an equivariant structure for the nonlinear relative orbital dynamics, with range and bearing measurements, and designed an Equivariant Filter. Simulations of a circumnavigation orbit and an r-bar approach demonstrated the filter's performance. 
Future work will delve into the limitations regarding the measurements, different orbits and perturbations, and explore a bearing-only variant.


\addtolength{\textheight}{-9cm}   


\section*{APPENDIX}
\label{sec:appendix}

To improve readability, we decompose the matrix $\mring{A}_{t}$ into nine (different sized) components, as
\begin{equation}
    \mring{A}_t = 
    \begin{bmatrix}
        \mring{A}_{1,1} & \mring{A}_{1,2} & \mring{A}_{1,3} \\
        \mring{A}_{2,1} & \mring{A}_{2,2} & \mring{A}_{2,3} \\
        \mring{A}_{3,1} & \mring{A}_{3,2} & \mring{A}_{3,3} \\
    \end{bmatrix}.
\end{equation}
Each of the above components of the matrix is given by
\begin{align}
    \mring{A}_{1,1} &= \frac{1}{\|\mring{r}\|^2} \bmat{\mr{e}_1^\T\\\mr{e}_2^\T} S(\mring{u}_{v})S(\mring{r})\bmat{\mr{e}_1&\mr{e}_2} ,\\
    \mring{A}_{1,2} &= -\frac{1}{\|\mring{r}\|^2} \bmat{\mr{e}_1^\T\\\mr{e}_2^\T}S(\mring{r})\mring{u}_{v} ,\\
    \mring{A}_{1,3} &= \frac{1}{\|\mring{r}\|^2} \bmat{\mr{e}_1^\T\\\mr{e}_2^\T}S(\mring{r}) ,\\    
    \mring{A}_{2,1} &=-\frac{1}{\|\mring{r}\|^2} \mring{u}_{v}^\T S(\mring{r})\bmat{\mr{e}_1&\mr{e}_2} ,\\
    \mring{A}_{2,2} &=-\frac{1}{\|\mring{r}\|^2} \mring{r}^\T \mring{u}_{v} ,\\
    \mring{A}_{2,3} &=\frac{1}{\|\mring{r}\|^2} \mring{r}^\T ,\\
    \mring{A}_{3,1} &=\bigg(S(\mring{\alpha})+S^{2}(\mring{\omega}) +\mu \frac{1}{\|\mring{r}+\mring{r}_t\|^3}\mring{u}_w \nonumber\\
        &\hspace{16pt}-3\mu\frac{(\mring{r}+\mring{r}_t)(\mring{r}+\mring{r}_t)^\T}{\|\mring{r}+\mring{r}_t\|^5}\mring{u}_w\bigg) S(\mring{r})\bmat{\mr{e}_1&\mr{e}_2} ,\\
    \mring{A}_{3,2} &=-\bigg(S(\mring{\alpha})+S^{2}(\mring{\omega}) +\mu \frac{1}{\|\mring{r}+\mring{r}_t\|^3}\mring{u}_w \nonumber\\
        &\hspace{26pt}-3\mu\frac{(\mring{r}+\mring{r}_t)(\mring{r}+\mring{r}_t)^\T}{\|\mring{r}+\mring{r}_t\|^5}\mring{u}_w \bigg)\mring{r} ,\\
    \mring{A}_{3,3} &=S\left(-\frac{\mring{r}\times\mring{u}_{v}}{\|\mring{r}\|^2}\right) - \frac{\mring{r}^\T \mring{u}_{v}}{\|\mring{r}\|^2}I - 2S(\mring{u}_{\omega}) .
\end{align}

\noindent The matrix $C_t$ is given by
\begin{equation}
    C_t = \frac{1}{2}\begin{bmatrix}
        S(y_1 + \hat{y}_1)\hat{R}^\T \begin{bmatrix}
            \mr{e}_1 & \mr{e}_2
        \end{bmatrix} & 0 & 0\\
        0 & -(y_2 + \hat{y}_2) & 0
    \end{bmatrix}.
    \label{eq:matrix_C}
\end{equation}





\bibliographystyle{IEEEtran}
\bibliography{IEEEabrv,bibliography}

\end{document}